\newtheorem{lem}{Lemma}
\newtheorem{remk}{Remark}
\newtheorem{problem}{Problem}
\newcommand{\norm}[1]{\left\lVert#1\right\rVert}
\DeclarePairedDelimiter\floor{\lfloor}{\rfloor}
\DeclareMathAlphabet{\pazocal}{OMS}{zplm}{m}{n}
\title{\LARGE \bf
Design of input for data-driven simulation \\ with Hankel and Page matrices}
\author{Andrea Iannelli, Mingzhou Yin, Roy S. Smith
\thanks{
This work is supported by the Swiss National Science Foundation under grant no. $200021\_178890$.
\newline
The authors are with the Department of Electrical Engineering, Automatic Control Lab, ETH, Z\"{u}rich 8092, Switzerland
{\tt\small iannelli/myin/rsmith@control.ee.ethz.ch}}
}
\begin{document}

\maketitle
\thispagestyle{empty}
\pagestyle{empty}

\begin{abstract}
The paper deals with the problem of designing informative input trajectories for data-driven simulation. First, the excitation requirements in the case of noise-free data are discussed and new weaker conditions, which assume the simulated input to be known in advance, are provided. Then, the case of noisy data trajectories is considered and an input design problem based on a recently proposed maximum likelihood estimator is formulated. A Bayesian interpretation is provided, and the implications of using Hankel and Page matrix representations are demonstrated. Numerical examples show the impact of the designed input on the predictive accuracy.
\end{abstract}

\section{Introduction}

Predicting the response of a plant to given initial conditions and input signals is a fundamental task for analysis and control of dynamical systems. While this is a basic problem when a model of the plant is available (either derived from first principles or identified from experiments), increasing interest has been devoted to \emph{direct} data-driven prediction methods, whereby future responses are expressed in terms of past (data) trajectories. 
The behavioral approach \cite{Polderman_Willems_book} has found undisputed success in recent years within this context. Building on the seminal work \cite{Willems_2005} providing conditions under which the subspace of trajectories of a linear system can be spanned by noise-free data matrices, extensive research has been done in this direction, especially with the goal of using this non-parametric description for control \cite{Markovsky2021_review}.

As the success of such data-driven controllers highly depends on
the accuracy of the predictions, we focus here on the data-driven simulation problem for linear systems and consider two interrelated questions. The first, to which Section \ref{Background} is devoted, is concerned with characterizing the excitation requirements of the data trajectory. This problem was originally addressed in \cite{Markovsky_2008} by assuming persistence of excitation of the input and then applying the result from \cite{Willems_2005}. It is shown in this paper that, for a particular simulation task, the requirements can be greatly relaxed. While these results guarantee that the simulated output can be exactly reconstructed, they all require clean (or noise-free) data trajectories. Building on these weaker excitation requirements, the second contribution, presented in Section \ref{ExpDes}, is a design procedure to choose the input data trajectory in order to maximize the accuracy of the simulated response when the data trajectory is contaminated by noise. The accuracy criterion is formulated from a Bayesian viewpoint by leveraging a maximum likelihood data-driven estimator \cite{Ming_SMM_Arxiv} and the concept of mutual information \cite{Cover_inf_theory}.

We put emphasis, in both results, on the effect of using Hankel and Page matrix representations for the data trajectory. While Hankel matrices have a rich history in system identification \cite{van1996subspace}, Page matrices, proposed in \cite{Damen_Page_2982SCL} in the context of realization algorithms with noisy Markov parameters, have received less attention. It was shown in \cite{Damen_Page_2982SCL} that in the Page matrix case, since there are no repeated entries, de-noising by thresholding the lowest singular values is provably optimal. This is not the case for Hankel matrices where low-rank approximations via the SVD implicitly attribute a non-uniform weight on the noisy Markov parameters. The advantages pertaining to the use of Page matrices as distributionally robust predictors were recently discussed in \cite{coulson2020distributionally_TAC}. However, in the behavioural setting Page matrices are reportedly less accurate than Hankel ones, when the comparison is done with equal data length, owing to their less favourable sample efficiency \cite{Markovsky2020identifiability}. Numerical results in Section \ref{Results} show that, by using the proposed input design strategy, this aspect can be ameliorated and Page matrices do offer better predictive accuracy, as conceptually expected as they eliminate noise coupling among entries.

The input design problem in the data-driven setting has recently started receiving attention \cite{vanWaarde_expDes,Iannelli_SYSID_2021}. In \cite{vanWaarde_expDes}, an online procedure to select an input sequence which satisfies the conditions in \cite{Willems_2005} without requiring persistence of excitation is proposed in the case of clean data. The work in \cite{Iannelli_SYSID_2021} considered the case of impulse response simulation with noisy Hankel matrices, and used the standard excitation requirements \cite{Markovsky_2008}.

\subsection*{Notation and definitions}
The mutual information \cite{Cover_inf_theory}  of two multivariate random variables $x$ and $y$ of size $n$ is defined as
       \begin{equation}\label{mutual_inf}
I(x;y)=H(x)+H(y)-H(x,y)=H(x)-H(x|y),
    \end{equation}
where $H(x)=-\int p(x) \log(p(x))dx$ is the Shannon's entropy of $x$. If $x$ has Gaussian distribution with covariance $\Sigma_{\text{x}}$, it holds $H(x)=\frac{1}{2}\log(2\pi e)^n+\frac{1}{2}\log(\det(\Sigma_{\text{x}}))$.

Given a  matrix $X \in \mathbb{R}^{n \times n}$, $X^{\dagger}$ denotes its pseudo-inverse 
and $\{ \lambda_{i}(X), i=1,...,n\}$ the set of its eigenvalues. The symbol $I_n$ denotes the identity matrix of size $n$.

Given a signal $z \in \mathbb{R}^{n_z}$, we use $z_{[i,j]}$ to denote both the concatenated vector
$\left[ z_i^{\top}\;  ... \; z_j^{\top} \right]^{\top}$ and the sequence $ \{z_k\}_{k=i}^{j}$ of length $N=j-i+1$.
Given $z_{[i,j]}$, the associated block Hankel and Page matrices with $L$ block rows are defined respectively as
\begin{equation*}
\begin{aligned}
\mathcal{H}_{L}(z_{[i,j]})&:=\begin{bmatrix}
    z_{i} & z_{i+1} & \cdots & z_{N+i-L} \\
    z_{i+1}& z_{i+2} & \cdots & z_{N+i-L+1} \\
    \vdots & \vdots &  & \vdots \\
    z_{i+L-1} &  z_{i+L}  & \cdots   & z_{N+i-1}\\
    \end{bmatrix}, \\ 
\mathcal{P}_{L}(z_{[i,j]})&:=\begin{bmatrix}
    z_{i} & z_{i+L} & \cdots & z_{i+\floor*{\frac{N}{L}}L-L} \\
    z_{i+1}& z_{i+L+1} & \cdots & z_{i+\floor*{\frac{N}{L}}L-L+1} \\
    \vdots & \vdots &  & \vdots \\
    z_{i+L-1} &  z_{i+2L-1}  & \cdots   & z_{i+\floor*{\frac{N}{L}}L-1}\\
    \end{bmatrix}. \\ 
\end{aligned}
\end{equation*}
where $\mathcal{H}_{L}(z_{[i,j]}) \in \mathbb{R}^{L n_z \times (N-L+1)}$ and $\mathcal{P}_{L}(z_{[i,j]}) \in \mathbb{R}^{L n_z \times \floor*{\frac{N}{L}}}$.
The signal $z_{[i,j]}$ is persistently exciting (PE) of order $L$ if $\mathcal{H}_{L}(z_{[i,j]})$ has full row rank \cite{Willems_2005} and $L$-Page exciting of order $M$ if the matrix
$\begin{bmatrix}
\mathcal{P}_{L}(z_{[i,i+\floor*{\frac{N}{L}}L-1-(M-1)L]})   \\ 
\mathcal{P}_{L}(z_{[L+i,i+\floor*{\frac{N}{L}}L-1-(M-2)L]})   \\ 
...   \\ 
\mathcal{P}_{L}(z_{[L(M-1)+i,i+\floor*{\frac{N}{L}}L-1]})   \\ 
\end{bmatrix}$
has full row rank \cite{coulson2020distributionally_TAC}.

Given a state-space model $(A,B,C,D)$, the block Toeplitz matrix of impulse response coefficients, the extended observability matrix, and the reversed extended
controllability matrix are defined respectively as
\begin{equation*}
\begin{aligned}
 \mathcal{T}_{i}&=\begin{bmatrix} 
    D & 0 & 0 & 0 \\
    CB& D & 0 & 0 \\
    \vdots & \ddots & D & 0 \\
     C A^{i-2} B & C A^{i-3} B & \cdots & D \\
    \end{bmatrix}, \\
\mathcal{O}_i&=\begin{bmatrix}C^{\top}\; \; (CA)^{\top}\;  ... \; (CA^{i-1})^{\top} \end{bmatrix}^{\top},\\
\mathcal{C}_i&=\begin{bmatrix}A^{i-1}B\; ... \; AB\; \; B \end{bmatrix}.\\
\end{aligned}
\end{equation*}


\section{Data-driven simulation}\label{Background}

\subsection{Problem setting and available results}\label{DD_std}

Consider a linear time-invariant (LTI) system
\begin{subequations}\label{system}
\begin{align}
 x_{t+1}&=A x_{t}+B u_{t}, \label{system_x}\\
 y_{t}&=C x_{t}+D u_{t}, \label{system_y} 
\end{align}
\end{subequations}
where $x \in \mathbb{R}^{n_x}$ is the state, $u \in \mathbb{R}^{n_u}$ is the input, and $y \in \mathbb{R}^{n_y}$ is the output. It is assumed that (\ref{system}) is minimal, i.e. the system is controllable, observable, and $n_x$ is its McMillan degree.
The lag $l$ is defined as the smallest integer $i$ such that $\mathcal{O}_i$ has rank $n_x$.
The data-driven (or model-free) simulation problem for (\ref{system}) is stated next.
\begin{problem}\label{DDsim_problem_ISO}
Given an input-output \emph{data} trajectory $(u_{\text{d} \; [0,N-1]},y_{\text{d}  \; [0,N-1]})$, an input \emph{simulation} trajectory $u_{\text{s} \; [0,L_s-1]}$,
and an initial condition $x_{\text{ini}}$, find the (unique) output \emph{simulation} trajectory $y_{\text{s} \; [0,L_s-1]}$.
\end{problem}
It was shown in \cite{Markovsky_2008} that Problem \ref{DDsim_problem_ISO} can be solved in a fully input-output (or representation-free) setting by leveraging results from behavioral system theory \cite{Polderman_Willems_book}. The first observation is that an input-output \emph{initial} trajectory $(u_{\text{ini} \; [0,L_0-1]},y_{\text{ini} \; [0,L_0-1]})$ of (\ref{system}) can be used, if $L_0 \geq l$, to uniquely define the initial condition $x_{\text{ini}}$ of Problem \ref{DDsim_problem_ISO}. Define $L=L_0+L_s$. 
\begin{lem}\label{DDsim_Markovsky_IJC}\cite{Markovsky_2008} 
Given $(u_{\text{ini} \; [0,L_0-1]},y_{\text{ini} \; [0,L_0-1]})$, with $L_0 \geq l$, $(u_{\text{d} \; [0,N-1]},y_{\text{d}  \; [0,N-1]})$, and $u_{\text{s} \; [0,L_s-1]}$.
Assume that the data generating system is controllable and that $u_{\text{d} \; [0,N-1]}$ is persistently exciting of order $L+n_x$. Partition Hankel matrices built using the data trajectory according to the indices $L_0$ (p) and $L_s$ (f) as follows 
       \begin{equation}\label{partition_H}
    \begin{bmatrix}U\\ \hline Y \end{bmatrix}
   = \begin{bmatrix}U_{\text{p}}\\U_{\text{f}}\\ \hline Y_{\text{p}}\\Y_{\text{f}} \end{bmatrix}
    :=\begin{bmatrix}\mathcal{H}_{L}(u_{\text{d} \; [0,N-1]})\\ \hline \mathcal{H}_{L}(y_{\text{d} \; [0,N-1]})\end{bmatrix}.
    \end{equation}
Then, $y_{\text{s} \; [0,L_s-1]}= Y_{\text{f}} g$, where $g$ satisfies
\begin{equation}\label{DDsim_Hankel_eq_g}
\begin{bmatrix}
u_{\text{ini} \; [0,L_0-1]}   \\ 
y_{\text{ini} \; [0,L_0-1]} \\
u_{\text{s} \; [0,L_s-1]}  \\
\end{bmatrix}=\begin{bmatrix}
U_{\text{p}}\\ 
Y_{\text{p}} \\
U_{\text{f}}  \\
\end{bmatrix}g.
\end{equation}
\end{lem}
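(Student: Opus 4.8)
The plan is to reduce the statement to the fundamental lemma of \cite{Willems_2005}: under the stated hypotheses---controllability of the data-generating system and persistence of excitation of $u_{\text{d} \; [0,N-1]}$ of order $L+n_x$---the columns of the stacked data matrix $\begin{bmatrix} U \\ Y \end{bmatrix}$ in (\ref{partition_H}) span exactly the set of length-$L$ input-output trajectories of (\ref{system}). (Recall that $(u_{\text{d}},y_{\text{d}})$ is itself a trajectory of (\ref{system}), so every column of the Hankel matrix is a length-$L$ trajectory; the fundamental lemma upgrades this inclusion to an equality by a rank count.) Equivalently, a pair is a length-$L$ trajectory of (\ref{system}) if and only if it can be written as $\begin{bmatrix} U \\ Y \end{bmatrix} g$ after the reordering of block rows used in (\ref{partition_H}). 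I would also use from the outset that, since $L_0 \ge l$, an input-output initial trajectory $(u_{\text{ini} \; [0,L_0-1]},y_{\text{ini} \; [0,L_0-1]})$ of (\ref{system}) is in one-to-one correspondence with the state $x_{\text{ini}}$ it produces, so one may work with this initial trajectory in place of $x_{\text{ini}}$.

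First I would prove \emph{existence} of a solution to (\ref{DDsim_Hankel_eq_g}). Let $y_{\text{s} \; [0,L_s-1]}$ be the true response of (\ref{system}) to $u_{\text{s} \; [0,L_s-1]}$ starting from $x_{\text{ini}}$. Then the concatenation of $(u_{\text{ini}},y_{\text{ini}})$ with $(u_{\text{s}},y_{\text{s}})$ is a length-$L$ trajectory of (\ref{system}), so by the fundamental lemma there is a $g$ with $U_{\text{p}} g = u_{\text{ini} \; [0,L_0-1]}$, $Y_{\text{p}} g = y_{\text{ini} \; [0,L_0-1]}$, $U_{\text{f}} g = u_{\text{s} \; [0,L_s-1]}$, and $Y_{\text{f}} g = y_{\text{s} \; [0,L_s-1]}$; this $g$ solves (\ref{DDsim_Hankel_eq_g}) and reproduces the sought output.

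Next I would prove that $Y_{\text{f}} g$ is the \emph{same for every} solution $g$ of (\ref{DDsim_Hankel_eq_g}), so that the true $y_{\text{s}}$ is obtained no matter which solution is used. Taking two solutions $g_1,g_2$ and setting $\delta = g_1 - g_2$ gives $U_{\text{p}} \delta = 0$, $Y_{\text{p}} \delta = 0$, $U_{\text{f}} \delta = 0$; the vector $\begin{bmatrix} U \\ Y \end{bmatrix} \delta$ is again a length-$L$ trajectory of (\ref{system}), with identically zero input and with zero output on its first $L_0$ samples. Because $L_0 \ge l$ so that $\mathcal{O}_{L_0}$ has rank $n_x$, a trajectory with zero input and zero output over a window of length $L_0$ must have zero underlying state, hence is identically zero; in particular $Y_{\text{f}} \delta = 0$, so $Y_{\text{f}} g_1 = Y_{\text{f}} g_2$. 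Combined with the existence step this yields $Y_{\text{f}} g = y_{\text{s} \; [0,L_s-1]}$ for every solution of (\ref{DDsim_Hankel_eq_g}), which is the claim (and shows Problem~\ref{DDsim_problem_ISO} is solved).

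The hard part is the uniqueness-of-output step: the reduced coefficient matrix appearing in (\ref{DDsim_Hankel_eq_g}) is generally wide with a nontrivial null space, and one must certify that no nonzero vector in that null space can perturb $Y_{\text{f}} g$---which is precisely where the hypothesis $L_0 \ge l$ enters through the observability/lag property of $\mathcal{O}_{L_0}$. By contrast, the existence step is essentially a direct invocation of \cite{Willems_2005} once the concatenated signal is recognised as a genuine length-$L$ trajectory of the system.
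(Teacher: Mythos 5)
Your proposal is correct and follows essentially the same route as the paper, which proves existence by invoking the Fundamental Lemma of \cite{Willems_2005} to conclude that the stacked Hankel matrix spans all length-$L$ trajectories of (\ref{system}), and obtains uniqueness of $Y_{\text{f}}g$ from the fact that an input-output window of length $L_0 \geq l$ uniquely determines the underlying state. Your null-space argument is just a fleshed-out version of the paper's one-sentence uniqueness remark, so the two proofs coincide in substance.
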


The proof is an application of the so-called \emph{Fundamental Lemma} (FL) \cite{Willems_2005}, which guarantees, under the assumptions of Lemma \ref{DDsim_Markovsky_IJC}, that any trajectory generated by (\ref{system}) is spanned by $\begin{bmatrix}\mathcal{H}_{L}(u_{\text{d} \; [0,N-1]})\\ \hline \mathcal{H}_{L}(y_{\text{d} \; [0,N-1]})\end{bmatrix}$. Uniqueness of the output \emph{simulation} trajectory is guaranteed by the fact that $x_{\text{ini}}$ is uniquely defined by an input-output trajectory of appropriate length.
Lemma \ref{DDsim_Markovsky_IJC} thus provides an answer to Problem \ref{DDsim_problem_ISO} by virtue of Hankel data matrices that characterize the system's behavior.
It has been recently shown in \cite{coulson2020distributionally_TAC} that Page data matrix representations can be used as well to span the system's trajectories. A straightforward application of the results from \cite{coulson2020distributionally_TAC} to the same arguments used in \cite{Markovsky_2008} yields the following result on data-driven simulation with Page matrices.
\begin{lem}\label{DDsim_Markovsky_Page}
Given $(u_{\text{ini} \; [0,L_0-1]},y_{\text{ini} \; [0,L_0-1]})$, with $L_0 \geq l$, $(u_{\text{d} \; [0,N-1]},y_{\text{d}  \; [0,N-1]})$, and $u_{\text{s} \; [0,L_s-1]}$.
Assume that the data generating system is controllable and that $u_{\text{d} \; [0,N-1]}$ is $L$-Page exciting of order $n_x+1$. Partition Page matrices built using the data trajectory according to the indices $L_0$ (p) and $L_s$ (f) as follows 
       \begin{equation}\label{partition_P}
    \begin{bmatrix}U\\ \hline Y \end{bmatrix}
    =\begin{bmatrix}U_{\text{p}}\\U_{\text{f}}\\ \hline Y_{\text{p}}\\Y_{\text{f}} \end{bmatrix}
    :=\begin{bmatrix}\mathcal{P}_{L}(u_{\text{d} \; [0,N-1]})\\ \hline \mathcal{P}_{L}(y_{\text{d} \; [0,N-1]})\end{bmatrix}.
    \end{equation}
Then, $y_{\text{s} \; [0,L_s-1]}= Y_{\text{f}} g$, where $g$ satisfies
\begin{equation}\label{DDsim_Page_eq_g}
\begin{bmatrix}
u_{\text{ini} \; [0,L_0-1]}   \\ 
y_{\text{ini} \; [0,L_0-1]} \\
u_{\text{s} \; [0,L_s-1]}  \\
\end{bmatrix}=\begin{bmatrix}
U_{\text{p}}\\ 
Y_{\text{p}} \\
U_{\text{f}}  \\
\end{bmatrix}g.
\end{equation}
\end{lem}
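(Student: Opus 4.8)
The plan is to follow the proof of Lemma~\ref{DDsim_Markovsky_IJC} essentially line by line, substituting the Fundamental Lemma of \cite{Willems_2005} with its Page-matrix counterpart established in \cite{coulson2020distributionally_TAC}. The first step is to recall that counterpart: under the stated hypotheses --- controllability of the data-generating system and $L$-Page excitation of order $n_x+1$ of $u_{\text{d} \; [0,N-1]}$ --- the column span of $\begin{bmatrix}U \\ Y\end{bmatrix} = \begin{bmatrix}\mathcal{P}_{L}(u_{\text{d} \; [0,N-1]}) \\ \mathcal{P}_{L}(y_{\text{d} \; [0,N-1]})\end{bmatrix}$ coincides with the set of all length-$L$ input-output trajectories of \eqref{system}; equivalently, a pair $(\tilde u_{[0,L-1]},\tilde y_{[0,L-1]})$ is a trajectory of \eqref{system} if and only if there exists $g$ with $\begin{bmatrix}\tilde u_{[0,L-1]} \\ \tilde y_{[0,L-1]}\end{bmatrix} = \begin{bmatrix}U \\ Y\end{bmatrix} g$.

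Second, I would establish existence of a $g$ solving \eqref{DDsim_Page_eq_g}. Because $L_0 \geq l$, the initial trajectory $(u_{\text{ini} \; [0,L_0-1]},y_{\text{ini} \; [0,L_0-1]})$ pins down a unique state $x_{\text{ini}}$, and concatenating it with $(u_{\text{s} \; [0,L_s-1]},y_{\text{s} \; [0,L_s-1]})$ --- where $y_{\text{s} \; [0,L_s-1]}$ denotes the true simulated output of Problem~\ref{DDsim_problem_ISO} --- yields a genuine length-$L$ trajectory of \eqref{system}. By the Page-matrix Fundamental Lemma above there is a $g$ with $\begin{bmatrix}U \\ Y\end{bmatrix}g$ equal to this trajectory; restricting to the block rows $U_{\text{p}},Y_{\text{p}},U_{\text{f}}$ shows that this $g$ satisfies \eqref{DDsim_Page_eq_g}, while the remaining block rows give $Y_{\text{f}}g = y_{\text{s} \; [0,L_s-1]}$.

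Third, I would show $Y_{\text{f}}g$ is the same for every solution $g$ of \eqref{DDsim_Page_eq_g}, which together with the previous step identifies it with $y_{\text{s} \; [0,L_s-1]}$. If $g_1,g_2$ both solve \eqref{DDsim_Page_eq_g}, then $\begin{bmatrix}U \\ Y\end{bmatrix}g_1$ and $\begin{bmatrix}U \\ Y\end{bmatrix}g_2$ are both trajectories of \eqref{system} with the same input $(u_{\text{ini} \; [0,L_0-1]},u_{\text{s} \; [0,L_s-1]})$ and the same first $L_0 \geq l$ output samples $y_{\text{ini} \; [0,L_0-1]}$. Since an input-output trajectory of length at least the lag $l$ uniquely determines the underlying state (here the state at time $L_0$), and that state together with $u_{\text{s} \; [0,L_s-1]}$ determines the output over the simulation horizon through \eqref{system}, it follows that $Y_{\text{f}}g_1 = Y_{\text{f}}g_2$. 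Hence $y_{\text{s} \; [0,L_s-1]} = Y_{\text{f}}g$ for any $g$ satisfying \eqref{DDsim_Page_eq_g}, and uniqueness of $y_{\text{s} \; [0,L_s-1]}$ as an answer to Problem~\ref{DDsim_problem_ISO} is the same as in Lemma~\ref{DDsim_Markovsky_IJC}.

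The main obstacle --- and the only point where the argument genuinely differs from \cite{Markovsky_2008} --- is the first step: one must be sure that the Page analogue of the Fundamental Lemma holds with excitation order $n_x+1$ rather than $L+n_x$. This relies on the result of \cite{coulson2020distributionally_TAC}, whose proof exploits the fact that the columns of a Page matrix are nonoverlapping length-$L$ windows of the data, so that $L$-Page excitation of order $n_x+1$ already supplies enough independent trajectory samples to span the (at most $(n_x + L n_u)$-dimensional) space of length-$L$ trajectories. Granting that result, everything else is a verbatim transcription of the Hankel argument.
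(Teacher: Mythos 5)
Your proposal is correct and follows exactly the route the paper intends: the paper gives no explicit proof of this lemma, stating only that it is ``a straightforward application of the results from \cite{coulson2020distributionally_TAC} to the same arguments used in \cite{Markovsky_2008}'', which is precisely the combination of the Page-matrix Fundamental Lemma with the existence/uniqueness argument of Lemma~\ref{DDsim_Markovsky_IJC} that you spell out. Your added detail on existence of $g$ and on uniqueness of $Y_{\text{f}}g$ via the state at time $L_0$ matches the justification the paper sketches for the Hankel case.
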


While, by uniqueness, they provide the same output simulation trajectory, the Hankel and Page matrices formulations are markedly different as far as data length requirements are concerned, because of the different excitation requirements. To guarantee the conditions of Lemma \ref{DDsim_Markovsky_IJC}, it must hold $N\geq (L+n_x)(n_u+1)-1$. Instead, one needs $N\geq L((n_u L +1)(n_x+1)-1)$ to apply Lemma \ref{DDsim_Markovsky_Page}. This of course implies that the Page formulation typically requires a much longer data set, with potential negative effects on experimental and computational costs. Another more subtle consequence is related to the fact that, for a given $N$, the number of columns $c_H$ of a Hankel matrix will always be larger than the number of columns $c_P$ of a Page matrix. Specifically, $L(c_H(N)-c_P(N))=N(L-1)-L^2+L$, thus this gap increases linearly with $N$. The implication is that, in the range of $N$ for which the two formulations can be compared (i.e. $N\geq L((n_u L +1)(n_x+1)-1)$), $c_H \gg c_P$. This represents a strong disadvantage of the Page matrix when working with noisy data trajectories, since the effect of noise can be averaged by increasing the number of columns of the data matrices. This fact plays a decisive role when comparing Hankel and Page predictive performance with noisy data, as recently observed in \cite{Markovsky2020identifiability}.

\subsection{Weaker excitation conditions}\label{DD_relax}

The excitation conditions prescribed by Lemmas \ref{DDsim_Markovsky_IJC} and \ref{DDsim_Markovsky_Page} are required because the respective data matrices are used to solve the simulation problem for \emph{any} input simulation trajectory and initial condition. However, when these are known in advance, the requirements on the data trajectory can be significantly relaxed, as shown next.

As a preamble, it is recalled that a necessary and sufficient condition for a generic input-output trajectory $(u_{[0,T-1]},y_{[0,T-1]})$ to be generated by (\ref{system}) is that there exists $x_0 \in  \mathbb{R}^{n_x}$ such that
\begin{equation}\label{iff_uy_B}
y_{[0,T-1]}=\mathcal{O}_{T} x_0+\mathcal{T}_{T} u_{[0,T-1]},
\end{equation}
as can be shown by writing (\ref{system_y}) for $t$ $\in$ $[0, T-1]$. 
Since by assumption $(u_{\text{ini} \; [0,L_0-1]},y_{\text{ini} \; [0,L_0-1]})$ is a trajectory of (\ref{system}), 
it satisfies (\ref{iff_uy_B}), and we denote by $\hat{x}$ the initial condition of the state for which (\ref{iff_uy_B}) is verified.
We also denote by $x_{\text{d} \; [0,N-1]}$ the state trajectory associated with the input-output data trajectory.
\begin{lem}\label{DDsim_Relaxed}
Given $(u_{\text{ini} \; [0,L_0-1]},y_{\text{ini} \; [0,L_0-1]})$, with $L_0 \geq l$, $(u_{\text{d} \; [0,N-1]},y_{\text{d}  \; [0,N-1]})$, and $u_{\text{s} \; [0,L_s-1]}$. Partition Hankel matrices as in (\ref{partition_H}). Assume that
\begin{equation}\label{DDsim_Relaxed_A12}
\begin{bmatrix}
u_{\text{ini} \; [0,L_0-1]}   \\
u_{\text{s} \; [0,L_s-1]}  \\
\hline
\hat{x} \\
\end{bmatrix}\in \text{Im}\left( \begin{bmatrix}
U_{\text{p}}\\
U_{\text{f}}  \\
\hline
X_{\text{p}} \\
\end{bmatrix}\right),
\end{equation}
where $X_{\text{p}}=[x_{\text{d},0} \; x_{\text{d},1} \;... \; x_{\text{d},N-L}]$.

Then, $y_{\text{s} \; [0,L_s-1]}= Y_{\text{f}} g$, where $g$ satisfies
\begin{equation}\label{DDsim_Relaxed_eq_g}
\begin{bmatrix}
u_{\text{ini} \; [0,L_0-1]}   \\ 
y_{\text{ini} \; [0,L_0-1]} \\
u_{\text{s} \; [0,L_s-1]}  \\
\end{bmatrix}=\begin{bmatrix}
U_{\text{p}}\\ 
Y_{\text{p}} \\
U_{\text{f}}  \\
\end{bmatrix}g.
\end{equation}
\end{lem}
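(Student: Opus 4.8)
The plan is to exploit that every column of the data Hankel matrix in (\ref{partition_H}) is itself a length-$L$ input-output trajectory of (\ref{system}), and to convert the subspace inclusion (\ref{DDsim_Relaxed_A12}) into identities for the induced state and output sequences. First I would apply (\ref{iff_uy_B}) with $T=L$ column-by-column: the column of $\begin{bmatrix}U\\Y\end{bmatrix}$ starting at time $k$ is the output response to its input part from the initial state $x_{\text{d},k}$, so collecting all columns gives the single matrix identity $Y=\mathcal{O}_{L}X_{\text{p}}+\mathcal{T}_{L}U$. Splitting the block rows at $L_0$, using $\mathcal{O}_{L}=\begin{bmatrix}\mathcal{O}_{L_0}\\\mathcal{O}_{L_s}A^{L_0}\end{bmatrix}$ and the block-lower-triangular structure $\mathcal{T}_{L}=\begin{bmatrix}\mathcal{T}_{L_0}&0\\\mathcal{O}_{L_s}\mathcal{C}_{L_0}&\mathcal{T}_{L_s}\end{bmatrix}$, this decomposes into the two identities $Y_{\text{p}}=\mathcal{O}_{L_0}X_{\text{p}}+\mathcal{T}_{L_0}U_{\text{p}}$ and $Y_{\text{f}}=\mathcal{O}_{L_s}A^{L_0}X_{\text{p}}+\mathcal{O}_{L_s}\mathcal{C}_{L_0}U_{\text{p}}+\mathcal{T}_{L_s}U_{\text{f}}$.

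For the existence of a $g$ solving (\ref{DDsim_Relaxed_eq_g}), I would take, using (\ref{DDsim_Relaxed_A12}), any $g$ with $U_{\text{p}}g=u_{\text{ini} \; [0,L_0-1]}$, $U_{\text{f}}g=u_{\text{s} \; [0,L_s-1]}$, and $X_{\text{p}}g=\hat{x}$. Right-multiplying the first identity of the previous step by this $g$ gives $Y_{\text{p}}g=\mathcal{O}_{L_0}\hat{x}+\mathcal{T}_{L_0}u_{\text{ini} \; [0,L_0-1]}$, which equals $y_{\text{ini} \; [0,L_0-1]}$ precisely because $\hat{x}$ is defined through (\ref{iff_uy_B}) with $T=L_0$ for the initial trajectory; hence this $g$ satisfies all three rows of (\ref{DDsim_Relaxed_eq_g}).

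To conclude $y_{\text{s} \; [0,L_s-1]}=Y_{\text{f}}g$, I would right-multiply the second identity of the first step by the same $g$, obtaining $Y_{\text{f}}g=\mathcal{O}_{L_s}\big(A^{L_0}\hat{x}+\mathcal{C}_{L_0}u_{\text{ini} \; [0,L_0-1]}\big)+\mathcal{T}_{L_s}u_{\text{s} \; [0,L_s-1]}$. Propagating (\ref{system_x}) from $\hat{x}$ over $[0,L_0-1]$ shows that the bracketed vector is exactly the initial condition $x_{\text{ini}}$ of Problem \ref{DDsim_problem_ISO}, so by (\ref{iff_uy_B}) with $T=L_s$ the right-hand side is $y_{\text{s} \; [0,L_s-1]}$. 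Finally, the value $Y_{\text{f}}g$ is independent of which solution of (\ref{DDsim_Relaxed_eq_g}) is chosen: if $\delta$ lies in the kernels of $U_{\text{p}}$, $Y_{\text{p}}$ and $U_{\text{f}}$, then the first identity of the first step forces $\mathcal{O}_{L_0}X_{\text{p}}\delta=0$, hence $X_{\text{p}}\delta=0$ since $L_0\ge l$ makes $\mathcal{O}_{L_0}$ full column rank, and then the second identity gives $Y_{\text{f}}\delta=0$.

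I expect the only delicate point to be the first step, namely verifying the block-lower-triangular factorization of $\mathcal{T}_{L}$ — in particular that its future-output/past-input block factors as $\mathcal{O}_{L_s}\mathcal{C}_{L_0}$ — and keeping straight the distinction between $\hat{x}$ (the state at the start of the initial trajectory) and $x_{\text{ini}}$ (the state at its end, from which the simulation runs). Once these are in place the rest is routine linear algebra, and it is worth emphasizing that no persistency-of-excitation assumption on $u_{\text{d} \; [0,N-1]}$ is used beyond the inclusion (\ref{DDsim_Relaxed_A12}).
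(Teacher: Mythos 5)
Your proof is correct, and it rests on the same two pillars as the paper's: the subspace-identification identities tying $Y_{\text{p}}$ and $Y_{\text{f}}$ to the data states and inputs, and the fact that $L_0 \geq l$ makes $\mathcal{O}_{L_0}$ full column rank so that $\hat{x}$ (and hence $x_{\text{ini}} = A^{L_0}\hat{x} + \mathcal{C}_{L_0}u_{\text{ini}\;[0,L_0-1]}$) is pinned down by the initial trajectory. The execution differs in two ways worth noting. First, the paper inverts the relation $y_{\text{ini}} = \mathcal{O}_{L_0}\hat{x} + \mathcal{T}_{L_0}u_{\text{ini}}$ via the pseudo-inverse to build explicit one-step-ahead predictor matrices $P$ and $Q$, writes $y_{\text{s}} = \begin{bmatrix}\mathcal{O}_{L_s}P \;\; \mathcal{O}_{L_s}Q \;\; \mathcal{T}_{L_s}\end{bmatrix}\bigl[\,u_{\text{ini}}^{\top}\; y_{\text{ini}}^{\top}\; u_{\text{s}}^{\top}\,\bigr]^{\top}$, and then verifies the single matrix identity $K = Y_{\text{f}}$, which delivers $y_{\text{s}} = Y_{\text{f}}g$ for \emph{every} solution $g$ of (\ref{DDsim_Relaxed_eq_g}) in one stroke; you instead avoid the pseudo-inverse by picking the particular $g$ furnished by (\ref{DDsim_Relaxed_A12}) (the one with $X_{\text{p}}g = \hat{x}$), pushing it through the factorized identities $Y_{\text{p}} = \mathcal{O}_{L_0}X_{\text{p}} + \mathcal{T}_{L_0}U_{\text{p}}$ and $Y_{\text{f}} = \mathcal{O}_{L_s}A^{L_0}X_{\text{p}} + \mathcal{O}_{L_s}\mathcal{C}_{L_0}U_{\text{p}} + \mathcal{T}_{L_s}U_{\text{f}}$, and then separately closing the gap for other solutions with the kernel argument ($U_{\text{p}}\delta = Y_{\text{p}}\delta = U_{\text{f}}\delta = 0$ forces $X_{\text{p}}\delta = 0$ by observability, hence $Y_{\text{f}}\delta = 0$). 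Your well-definedness step is genuinely needed under your organization and you supply it correctly, so there is no gap; your route trades the paper's predictor construction for an explicit block factorization of $\mathcal{T}_{L}$ and $\mathcal{O}_{L}$ (both of which you identify correctly as the delicate bookkeeping), arguably making more transparent exactly where the assumption $X_{\text{p}}g = \hat{x}$ enters, while the paper's $K = Y_{\text{f}}$ identity packages uniqueness more compactly.
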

\begin{proof}
Specializing (\ref{iff_uy_B}) to the initial trajectory gives
\begin{subequations}\label{iff_uy_ini}
\begin{align}
&y_{\text{ini} \; [0,L_0-1]}=\mathcal{O}_{L_0} \hat{x}+\mathcal{T}_{L_0} u_{\text{ini} \; [0,L_0-1]}, \label{iff_uy_ini_1}\\
&\hat{x} =\mathcal{O}_{L_0}^{\dagger} y_{\text{ini} \; [0,L_0-1]} - \mathcal{O}_{L_0}^{\dagger} \mathcal{T}_{L_0} u_{\text{ini} \; [0,L_0-1]},\label{iff_uy_ini_2}
\end{align}
\end{subequations}
where the vector $\hat{x}$ is unique, since by assumption $L_0 \geq l$ and thus $\mathcal{O}_{L_0}$ has full rank. Using (\ref{DDsim_Relaxed_A12}), we conclude from (\ref{iff_uy_ini}) that
$\begin{bmatrix}
u_{\text{ini} \; [0,L_0-1]}   \\
y_{\text{ini} \; [0,L_0-1]}  \\
u_{\text{s} \; [0,L_s-1]}  \\
\end{bmatrix}\in \text{Im}\left( \begin{bmatrix}
U_{\text{p}}\\
\mathcal{O}_{L_0} X_{\text{p}} + \mathcal{T}_{L_0} U_{\text{p}} \\
U_{\text{f}}\\
\end{bmatrix}\right)$.
It can be shown starting from (\ref{system}), see e.g. the subspace identification literature \cite{van1996subspace}, that
the data trajectory $(u_{\text{d} \; [0,N-1]},y_{\text{d}  \; [0,N-1]})$ satisfies the matrix equation \\ $Y_{\text{p}}=\mathcal{O}_{L_0} X_{\text{p}} + \mathcal{T}_{L_0} U_{\text{p}}$. Therefore
\begin{equation}\label{range_cond}
\begin{bmatrix}
u_{\text{ini} \; [0,L_0-1]}   \\
y_{\text{ini} \; [0,L_0-1]}  \\
u_{\text{s} \; [0,L_s-1]}  \\
\end{bmatrix}\in \text{Im}\left( \begin{bmatrix}
U_{\text{p}}\\
Y_{\text{p}}\\
U_{\text{f}}\\
\end{bmatrix}\right).
\end{equation}

Consider now (\ref{iff_uy_B}) for the simulation trajectory
\begin{equation}\label{iff_uy_s}
y_{\text{s} \; [0,L_s-1]}=\mathcal{O}_{L_s} x_{\text{ini}}+\mathcal{T}_{L_s} u_{\text{s} \; [0,L_s-1]},
\end{equation}
where, to obtain $x_{\text{ini}}$, write (\ref{system_x}) in the interval $[0,L_0]$
\begin{equation}\label{x_ini_expl}
x_{\text{ini}}=x_{L_0}=A^{L_0}\hat{x}+\mathcal{C}_{L_0}u_{\text{ini} \; [0,L_0-1]}.
\end{equation}
Because $\hat{x}$ is unique, $x_{\text{ini}}$ is uniquely defined. Substituting the explicit expression of $\hat{x}$ (\ref{iff_uy_ini_2}) in (\ref{x_ini_expl}) yields  
\begin{equation*}
x_{\text{ini}}=\underbrace{\left(\mathcal{C}_{L_0}- A^{L_0}\mathcal{O}_{L_0}^{\dagger} \mathcal{T}_{L_0}\right)}_\text{$P$} u_{\text{ini} \; [0,L_0-1]} + \underbrace{A^{L_0}\mathcal{O}_{L_0}^{\dagger}}_\text{$Q$} y_{\text{ini} \; [0,L_0-1]}.
\end{equation*}
Rewrite (\ref{iff_uy_s}) equivalently as
\begin{equation}\label{iff_uy_s_PQ}
\begin{aligned}
y_{\text{s} \; [0,L_s-1]}&=\begin{bmatrix}\mathcal{O}_{L_s} P \; \; \mathcal{O}_{L_s} Q  \; \;  \mathcal{T}_{L_s}\end{bmatrix}
\begin{bmatrix}
u_{\text{ini} \; [0,L_0-1]}   \\ 
y_{\text{ini} \; [0,L_0-1]} \\
u_{\text{s} \; [0,L_s-1]}  \\
\end{bmatrix},\\
&=\underbrace{\begin{bmatrix}\mathcal{O}_{L_s} P \; \; \mathcal{O}_{L_s} Q  \; \;  \mathcal{T}_{L_s}\end{bmatrix}
\begin{bmatrix}
U_{\text{p}}\\ 
Y_{\text{p}} \\
U_{\text{f}}  \\
\end{bmatrix}}_\text{$K$}g,\\
\end{aligned}
\end{equation}
where the existence of the vector $g$ in the last equality is guaranteed by (\ref{range_cond}).
Because (\ref{iff_uy_s}) uniquely defines $y_{\text{s} \; [0,L_s-1]}$, showing $K=Y_f$ proves the statement.
To see this, note that writing (\ref{system}) in matrix form for the data trajectory $(u_{\text{d} \; [0,N-1]},y_{\text{d}  \; [0,N-1]})$ also yields $Y_{\text{f}}=\mathcal{O}_{L_s} X_{\text{f}} + \mathcal{T}_{L_s} U_{\text{f}}$, where $X_{\text{f}}=[x_{\text{d},L_0} \; x_{\text{d},L_0+1}... \; x_{\text{d},L_0+N-L}]$.
Note also that, by their definition, $P$ and $Q$ predict the current state given previous input and output sequences. This of course also holds for the data trajectory, and thus we have that
$X_{\text{f}}=\begin{bmatrix} P \; \; Q  \end{bmatrix}\begin{bmatrix}
U_{p}\\ 
Y_p \\
\end{bmatrix}$. Therefore
\begin{equation}
K=\mathcal{O}_{L_s} \begin{bmatrix} P \; \; Q  \end{bmatrix}\begin{bmatrix}
U_{\text{p}}\\ 
Y_{\text{p}} \\
\end{bmatrix}+\mathcal{T}_{L_s}U_{\text{f}}
=Y_{\text{f}},
\end{equation}
which concludes the proof.
\end{proof}

The result is proved for Hankel data matrices, but it can be seamlessly extended to Page matrices by simply changing the notation. That is, define $X_{\text{p}}=[x_{\text{d}, \;0} \; x_{\text{d}, \;L}... \; x_{\text{d}, \;(\floor*{\frac{N}{L}}-1)L}]$ and $X_{\text{f}}=[x_{\text{d}, \;L_0} \; x_{\text{d}, \;L_0+L}... \; x_{\text{d}, \;L_0+(\floor*{\frac{N}{L}}-1)L}]$, and replace partition (\ref{partition_H}) by (\ref{partition_P}). Notably, the excitation requirements do not change between the two data representations, in contrast to the previous results discussed in Section \ref{DD_std}.

\begin{remk}
Equation \ref{DDsim_Relaxed_A12} can in principle be satisfied with rank 1 data matrices, i.e. the minimum number of columns is 1 and the minimum length $N$ of the data trajectory is $L$. In this case, the range condition on $\hat{x}$ is satisfied if and only if the initial conditions of the initial and data trajectories coincide up to a scaling factor, while for $N>L$ this is only a sufficient condition. Note also that the assumption on $\hat{x}$ in (\ref{DDsim_Relaxed_A12}) is only required to guarantee (\ref{range_cond}). The latter is a condition that can be checked directly from the data, and can thus be used in lieu of (\ref{DDsim_Relaxed_A12}) to define the set of $y_{\text{ini} \; [0,L_0-1]}$ which can be simulated given an input trajectory satisfying the excitation conditions in (\ref{DDsim_Relaxed_A12}).
For controllable systems this is in principle w.l.o.g. because, as also shown in (\ref{x_ini_expl}), any $x_{\text{ini}}$ can be generated by $u_{\text{ini} \; [0,L_0-1]}$ alone.
\end{remk}

\section{Input design for simulation with noisy data}\label{ExpDes}

The main question to be addressed is: What is the best input data trajectory $u_{\text{d} \; [0,N-1]}$ to solve Problem 1? Lemmas \ref{DDsim_Markovsky_IJC}, \ref{DDsim_Markovsky_Page}, and \ref{DDsim_Relaxed} are already experiment design results, as also recognized in \cite{vanWaarde_expDes} with respect to the Fundamental Lemma, because they provide both the minimum experiment length and the required signal properties by characterizing the span of the associated data matrices. In fact, when the initial condition and the input simulation trajectory are set in advance, Lemma \ref{DDsim_Relaxed} gives the less restrictive excitation requirements. However, all of these results rely on the assumption of working with clean data. If the data trajectory is contaminated with noise, computing exactly the output simulation trajectory is not possible as these results no longer hold. A possible strategy to make use of these methodologies while reducing the effect of the noise was presented in \cite{Ming_SMM_Arxiv}, under the name of signal matrix model (SMM). This is reviewed next, with an emphasis on some novel aspects, since it will be used later to frame the input design problem.

For the sake of readability, the time indexes are dropped from the sequences' subscripts. The same time indexes employed in the previous section apply for the respective trajectories (e.g. $y_{\text{s}}$ will denote $y_{\text{s} \; [0,L_s-1]}$).

\subsection{The SMM estimator}\label{ExpDes_SMM}
The signal matrix model is a maximum likelihood estimator of the vector $g$ introduced in the previous lemmas. This was proposed in \cite{Ming_SMM_Arxiv} in conjunction with Lemma \ref{DDsim_Markovsky_IJC}, and, under its assumptions and Hankel data matrices partitions, provided a predicted output simulation trajectory $y_{\text{s}}= Y_{\text{f}} g_{\text{SMM}}$ with favourable statistical properties.

In the same spirit, a maximum likelihood estimator $g_{\text{SMM}}$ can be defined building on Lemma \ref{DDsim_Relaxed}. We consider the case where the output data trajectory ($n_y=1$ is assumed here for simplicity of representation) is subject to i.i.d. Gaussian noise
\begin{equation}\label{noise_data}
\tilde{y}_{\text{d},i}= y_{\text{d},i}+w_i,\; \; w_i\sim \pazocal{N}(0,\sigma^2),\; \; i=1,...,N-1.
\end{equation}
The rest of the problem's data ($u_{\text{d}}$, $u_{\text{ini}}$, and $y_{\text{ini}}$) are assumed to be noise-free because either they will be optimized over later (i.e. the input data trajectory) or are fixed by the analyst (i.e. the initial trajectory).

\begin{lem}\label{SMM_Relaxed}
Given $(u_{\text{ini}},y_{\text{ini}})$, with $L_0 \geq l$, $(u_{\text{d}},\tilde{y}_{\text{d}})$, and $u_{\text{s}}$ satisfying the assumptions of Lemma \ref{DDsim_Relaxed}. Partition the Hankel matrices as in (\ref{partition_H}). Define the random variable $\bar{y}:=Y g-\left[\begin{smallmatrix}y_{\text{ini}}\\ 0 \end{smallmatrix}\right]$ where the usual partition applies. The value of $g$ that maximizes $ \mathbb{E}(\bar{y}|g)$, i.e. the conditional probability of observing the realization $\bar{y}$ corresponding to the available data given $g$, is
given by
\begin{equation}
\underset{g\in\pazocal{G}}{\text{min}}\
    \text{logdet}(\Sigma_{\text{y}}(g))+\begin{bmatrix}Y_\text{p} g-y_{\text{ini}}\\ 0\end{bmatrix}^{\top}\Sigma_{\text{y}}^{-1}(g)\begin{bmatrix}Y_\text{p} g-y_{\text{ini}}\\ 0\end{bmatrix},
\label{eqn:opt0}
\end{equation}
where $\pazocal{G}$ is the set
\begin{equation}
    \pazocal{G} = \left\{g\in \mathbb{R}^{N-L+1}\left|\begin{bmatrix}
    U_\text{p}\\U_\text{f}
    \end{bmatrix}g=\begin{bmatrix}
    u_{\text{ini}} \\ u_{\text{s}}
    \end{bmatrix}\right.\right\},
\end{equation}
and
\begin{equation}    \label{eqn:stats}
    \left(\Sigma_{\text{y}}\right)_{i,j} = \left( \text{cov}(\bar{y}|g)\right)_{i,j}= \sigma^2\sum_{k=1}^{N-L+1-|i-j|}g_k g_{k+|i-j|}.
\end{equation}
The maximum likelihood simulation trajectory is then 
\begin{equation}\label{SMM_sim}
\hat{y}_{\text{s},\text{\tiny SMM}}= Y_{\text{f}} g_{\text{\tiny SMM}},
\end{equation}
where $g_{\text{SMM}} \in \arg \min (\ref{eqn:opt0})$.
\end{lem}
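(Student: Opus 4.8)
The plan is to redo, for the relaxed setting of Lemma~\ref{DDsim_Relaxed}, the maximum-likelihood derivation that produced the SMM estimator in \cite{Ming_SMM_Arxiv}, but now drawing the noise-free structural identities from Lemma~\ref{DDsim_Relaxed} instead of the Fundamental Lemma. The starting point is that, with the noisy output data (\ref{noise_data}), the Hankel matrix $Y=\mathcal{H}_L(\tilde y_{\text{d}})$ splits as $Y=\bar Y+W$, where $\bar Y=\mathcal{H}_L(y_{\text{d}})$ is noise-free and $W=\mathcal{H}_L(w)$ is the Hankel matrix of the noise. Hence, for any fixed $g$, $\bar y=Yg-\left[\begin{smallmatrix}y_{\text{ini}}\\0\end{smallmatrix}\right]=\bar Y g-\left[\begin{smallmatrix}y_{\text{ini}}\\0\end{smallmatrix}\right]+Wg$ is an affine image of the Gaussian vector $w$, so $\bar y\mid g$ is Gaussian, and its density is pinned down by its mean and covariance; these are the two objects to identify.

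First I would compute the covariance. Each entry of $Wg$ is a linear combination $\sum_k g_k w_{\cdot}$ of consecutive noise samples dictated by the Hankel structure, so with i.i.d.\ $w_i$ of variance $\sigma^2$ a direct calculation gives $\left(\text{cov}(\bar y\mid g)\right)_{i,j}=\sigma^2\sum_k g_k g_{k+|i-j|}$, the sum over the indices for which both factors exist; this is exactly (\ref{eqn:stats}), and this is where the Hankel structure enters, through the repeated noise entries that couple the rows of $W$. Next I would identify the mean via Lemma~\ref{DDsim_Relaxed}: under its hypotheses the noise-free relations $\bar Y_{\text{p}}g=y_{\text{ini}}$ and $\bar Y_{\text{f}}g=y_{\text{s}}$ hold for the pertinent $g$ (these are precisely (\ref{range_cond}) and the identity $K=Y_{\text{f}}$ proved there), so $\bar y-\left[\begin{smallmatrix}0\\y_{\text{s}}\end{smallmatrix}\right]=Wg$; thus $\bar y\mid g$ is zero-mean in the ``past'' block, and, adopting the prediction rule $\hat y_{\text{s}}=Y_{\text{f}}g$ of (\ref{SMM_sim}) for the unknown $y_{\text{s}}$, the residual in the ``future'' block is $Y_{\text{f}}g-\hat y_{\text{s}}=0$. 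Consequently the realization of $\bar y$ relative to the model, $\left[\begin{smallmatrix}Y_{\text{p}}g-y_{\text{ini}}\\0\end{smallmatrix}\right]$, is computable from the data for every candidate $g$, even though $Wg$ itself is not.

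Substituting the mean, the covariance, and this realization into the Gaussian log-density, and dropping $g$-independent constants and an overall positive factor, turns maximization of the likelihood into minimization of $\text{logdet}(\Sigma_{\text{y}}(g))+\left[\begin{smallmatrix}Y_{\text{p}}g-y_{\text{ini}}\\0\end{smallmatrix}\right]^{\top}\Sigma_{\text{y}}^{-1}(g)\left[\begin{smallmatrix}Y_{\text{p}}g-y_{\text{ini}}\\0\end{smallmatrix}\right]$; the remaining noise-free equations of (\ref{DDsim_Relaxed_eq_g}), i.e.\ $U_{\text{p}}g=u_{\text{ini}}$ and $U_{\text{f}}g=u_{\text{s}}$, are exactly the defining constraints of $\pazocal{G}$, giving (\ref{eqn:opt0}), and the simulated trajectory $\hat y_{\text{s},\text{\tiny SMM}}=Y_{\text{f}}g_{\text{\tiny SMM}}$ is just the prediction rule already used. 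The step I expect to be the main obstacle is precisely this mean/residual identification: one must argue carefully that the likelihood cannot be evaluated naively (the deviation $Wg$ of $\bar y$ from its true mean is unobservable), and that its usable form rests on plugging in the structural identities of Lemma~\ref{DDsim_Relaxed} together with the convention $\hat y_{\text{s}}=Y_{\text{f}}g$ — in other words, on justifying why the ``future'' block of the residual is set to zero rather than treated as a free nuisance parameter.
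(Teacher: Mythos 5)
Your proposal is correct and follows essentially the same route the paper takes: the paper gives no explicit proof of Lemma~\ref{SMM_Relaxed}, stating only that it combines the maximum-likelihood formulation of \cite{Ming_SMM_Arxiv} with Lemma~\ref{DDsim_Relaxed}, and your reconstruction of that derivation --- the decomposition $Y=\bar{Y}+W$, the banded covariance (\ref{eqn:stats}) from the repeated Hankel noise entries, the identification of the mean via the noise-free identities of Lemma~\ref{DDsim_Relaxed}, and the zeroing of the future residual by adopting $\hat{y}_{\text{s}}=Y_{\text{f}}g$ --- is exactly the intended argument leading to (\ref{eqn:opt0}) over $\pazocal{G}$.
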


Lemma \ref{SMM_Relaxed} combines the maximum likelihood formulation from \cite{Ming_SMM_Arxiv}, which leads to problem (\ref{eqn:opt0}), with the result in Lemma \ref{DDsim_Relaxed}. The SMM estimator can also be postulated for Page matrix representations. 
Besides the notational difference of replacing partition (\ref{partition_H}) by (\ref{partition_P}), an important distinction is that for Page matrices
\begin{equation*}
    \Sigma_{\text{y}} = \text{cov}(\bar{y}|g)= \sigma^2 \norm{g}_2^2 I_{L}.
\end{equation*}
That is, the covariance matrix appearing in the optimization problem (\ref{eqn:opt0}) is diagonal. 
This comes from the fact that $\text{cov}(\bar{y}|g)=\left(g^\top \otimes I_L\right)\text{cov}\left(\text{vec}(Y)\right)\left(g\otimes I_L \right)$.
Due to the absence of repeated entries in the Page matrix (\ref{partition_P}), $\text{vec}(Y)$ is a vector of uncorrelated random variables with covariance $\sigma^2 I_{N}$.
This is of course not the case when the structured Hankel matrix is used, due to the repetitions in each column, which leads to the banded structure in (\ref{eqn:stats}). Nonetheless, setting to zero the off-diagonal terms simplifies the solution of problem (\ref{eqn:opt0}), thus the approximation to a diagonal $\Sigma_{\text{y}}$ was proposed in \cite{Ming_SMM_Arxiv} even when working with Hankel matrices.

\subsection{Input design for SMM}\label{ExpDes_inpDes}

In \cite{Iannelli_SYSID_2021} the input design problem for identification of the truncated infinite impulse response using the SMM estimator with Hankel matrices was investigated.
The mean-square error (MSE) matrix \cite{LjungBook2} of the estimated response was chosen to measure the accuracy of the estimates. The input design problem was then formulated as the minimization of A-, D-, and E- optimality criteria. 
The main finding was that, if the off-diagonal entries of $\Sigma_{\text{y}}$ are neglected, minimizing these criteria is equivalent to minimizing the Euclidean norm of $g_{\text{SMM}}$.
As observed earlier, this is true only for Page matrices and not in general for Hankel matrices. In the latter case, minimization of the Euclidean norm of $g_{\text{SMM}}$ is justified from an A-optimality viewpoint, since this consists of minimizing the trace of the MSE matrix. Inspired by the recent work in \cite{Fujimoto_Auto_2018} concerning experiment design using tools from information theory, we provide here a Bayesian formulation of the SMM input design problem and show the implications of Page and Hankel matrix representations.

\subsubsection{Bayesian perspective}
The data-based SMM output simulation trajectory can be modelled as a Gaussian random variable (the subscript SMM will be dropped)  
\begin{equation}\label{Bayes_meas}
    y_{\text{s}}^{\text{D}} \sim \pazocal{N}(\hat{y}_{\text{s}},\Sigma_{\text{y,f}}),
\end{equation}
where $\hat{y}_{\text{s}}$ is given in (\ref{SMM_sim}) and $\Sigma_{\text{y,f}}$ is the matrix made of the last $L_{s}$ rows and columns of $\Sigma_{\text{y}}$ (evaluated at $g$). Assume that prior knowledge on $y_{\text{s}}$ is encoded in a positive definite kernel matrix $\Sigma_{\text{\tiny K}}$, resulting in a prior distribution $y^{0}_{\text{s}} \sim \pazocal{N}(0,\Sigma_{\text{K}})$. When the simulation problem consists of estimating the first $\text{$L_s$}$ coefficients of the truncated infinite impulse response \cite{Markovsky_2005}, one can refer to an extensive literature on choices of kernel matrices encoding priors related to system's theoretic properties \cite{Pillonetto_2010,Chen_Auto12}. However, the idea of using priors to improve on and regularize the estimate from data (\ref{Bayes_meas}) can in principle be used for other simulation problems as well. Examples of priors might include smoothness and decay rate of the response. We can then, in the spirit of Kalman filtering, combine prior and data-based estimates to provide the MSE estimate.
This can be interpreted as the posterior distribution of $y_{\text{s}}$ given the data trajectory
\begin{equation}
   \left( y_{\text{s}}| (u_{\text{d}},y_{\text{d}}) \right) \sim \pazocal{N}(K \hat{y}_{\text{s}},\Sigma_\text{post}),
\end{equation}
where $K=\Sigma_{\text{K}}\left(\Sigma_{\text{K}}+\Sigma_{\text{y,f}} \right)^{-1}$ is the Kalman gain and $\Sigma_\text{post}=\Sigma_{\text{K}}-\Sigma_{\text{K}}\left(\Sigma_{\text{K}}+\Sigma_{\text{y,f}}\right)^{-1}\Sigma_{\text{K}}$ is the posterior covariance. These expressions were already presented in \cite{Ming_SMM_Arxiv} and are standard filtering relationships \cite{Simon2006}. The novelty here is their interpretation in the data-driven simulation setting.
This is important as it enables the formal definition of the input design problem as the maximization of the distance between the prior and posterior distributions of $y_{\text{s}}$. By using an information theoretic result \cite{Cover_inf_theory}, the expected value of the KL divergence between these distributions coincides with the mutual information of $y_{\text{s}}$ and the data $(u_{\text{d}},y_{\text{d}})$.
Therefore, maximizing $I(y_{\text{s}};(u_{\text{d}},y_{\text{d}}))$ yields, from a Bayesian viewpoint, an informative experiment.

Using its definition (\ref{mutual_inf}), the mutual information for the case of interest can be defined as
       \begin{subequations}\label{mutual_inf_SMM}
       \begin{align}
&I(y_{\text{s}};(u_{\text{d}},y_{\text{d}}))=H(y_{\text{s}})-H(y_{\text{s}}| (u_{\text{d}},y_{\text{d}})), \nonumber\\
&= \frac{1}{2}\left(\log(\det(\Sigma_{\text{K}}))-\log(\det(\Sigma_{\text{post}}))\right) , \label{mutual_inf_SMM_2}\\
&= \frac{1}{2}\log(\det(I_{L_s}+\Sigma_{\text{K}}\Sigma_{\text{y,f}}^{-1})) , \label{mutual_inf_SMM_3}
    \end{align}
    \end{subequations}
where (\ref{mutual_inf_SMM_3}) comes from the fact that
       \begin{equation*}
       \begin{aligned}
\Sigma_\text{post}&=\Sigma_{\text{K}}-\left(\Sigma_{\text{K}}\Sigma_{\text{y,f}}^{-1}+I_{L_s}\right)^{-1}\Sigma_{\text{K}}\Sigma_{\text{y,f}}^{-1}\Sigma_{\text{K}},\\
&=\Sigma_{\text{K}}-\left(\Sigma_{\text{y,f}}^{-1}+\Sigma_{\text{K}}^{-1}\right)^{-1}\Sigma_{\text{y,f}}^{-1}\Sigma_{\text{K}},   \\
&=\left(\Sigma_{\text{y,f}}^{-1}+\Sigma_{\text{K}}^{-1}\right)^{-1}\left[ \left(\Sigma_{\text{y,f}}^{-1}+\Sigma_{\text{K}}^{-1}\right)\Sigma_{\text{K}}-\Sigma_{\text{y,f}}^{-1}\Sigma_{\text{K}} \right],\\
&= \left(\Sigma_{\text{y,f}}^{-1}+\Sigma_{\text{K}}^{-1} \right)^{-1}.  \\
    \end{aligned}
    \end{equation*}
The following result shows an important relationship between the mutual information and $g$.
\begin{lem}\label{ID_MutInf_lemma}
If $\Sigma_{\text{y,f}}= \sigma^2 \norm{g}_2^2 I_{L_s}$, then there exist functions $f$ and $h$ such that
\begin{equation*}
2I(y_{\text{s}};(u_{\text{d}},y_{\text{d}}))=\hat{I}(y_{\text{s}};(u_{\text{d}},y_{\text{d}}))=f(\norm{g}_2^2,\Sigma_{\text{K}})+h(\Sigma_{\text{K}}),
\end{equation*}
where $f(\cdot,\cdot)$ is monotonically decreasing with respect to the first argument irrespective of the second, and $h$ does not depend on $g$. 
\end{lem}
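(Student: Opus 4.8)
The plan is to read the decomposition directly off the closed-form expressions for the mutual information already derived in (\ref{mutual_inf_SMM}), after specializing to $\Sigma_{\text{y,f}}=\sigma^2\norm{g}_2^2 I_{L_s}$. First I would observe that $\norm{g}_2^2>0$ (otherwise $Y_{\text{f}}g$ is the trivial trajectory and $\Sigma_{\text{y,f}}$ is singular), so under the stated hypothesis $\Sigma_{\text{y,f}}$ is invertible with $\Sigma_{\text{y,f}}^{-1}=\frac{1}{\sigma^2\norm{g}_2^2}I_{L_s}$. Plugging this into the filtering identity $\Sigma_{\text{post}}^{-1}=\Sigma_{\text{y,f}}^{-1}+\Sigma_{\text{K}}^{-1}$ used right after (\ref{mutual_inf_SMM}) gives $\Sigma_{\text{post}}^{-1}=\frac{1}{\sigma^2\norm{g}_2^2}I_{L_s}+\Sigma_{\text{K}}^{-1}$, which is positive definite because $\Sigma_{\text{K}}\succ 0$ by assumption. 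Then (\ref{mutual_inf_SMM_2}) already exhibits the required split, since $\log(\det(\Sigma_{\text{K}}))$ does not involve $g$ while $-\log(\det(\Sigma_{\text{post}}))=\log(\det(\Sigma_{\text{post}}^{-1}))$ depends on $g$ only through $\norm{g}_2^2$.

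Concretely I would set
\begin{equation*}
h(\Sigma_{\text{K}}):=\log(\det(\Sigma_{\text{K}})),\qquad f(\norm{g}_2^2,\Sigma_{\text{K}}):=\log\!\left(\det\!\left(\tfrac{1}{\sigma^2\norm{g}_2^2}I_{L_s}+\Sigma_{\text{K}}^{-1}\right)\right),
\end{equation*}
so that $2I(y_{\text{s}};(u_{\text{d}},y_{\text{d}}))=f(\norm{g}_2^2,\Sigma_{\text{K}})+h(\Sigma_{\text{K}})$. The same $f$ is obtained from (\ref{mutual_inf_SMM_3}) by diagonalizing $\Sigma_{\text{K}}$: writing $\lambda_i:=\lambda_i(\Sigma_{\text{K}})>0$ one gets $2I=\sum_{i=1}^{L_s}\log\!\left(1+\frac{\lambda_i}{\sigma^2\norm{g}_2^2}\right)$, i.e. $f(\norm{g}_2^2,\Sigma_{\text{K}})=\sum_{i=1}^{L_s}\log(1+\lambda_i/(\sigma^2\norm{g}_2^2))$ and $h(\Sigma_{\text{K}})=-\sum_i\log\lambda_i+{}$const; either identification works. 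The equality $2I=\hat I$ in the statement simply records that, under $\Sigma_{\text{y,f}}=\sigma^2\norm{g}_2^2 I_{L_s}$, the diagonal-covariance criterion (exact for Page matrices, an approximation for Hankel ones) coincides with the true mutual information, so no further argument is needed there.

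Finally I would prove monotonicity of $f$ in its first argument $t:=\norm{g}_2^2$ for every $\Sigma_{\text{K}}\succ 0$. Given $0<t_1<t_2$, we have $\frac{1}{\sigma^2 t_1}>\frac{1}{\sigma^2 t_2}>0$, hence $\frac{1}{\sigma^2 t_1}I_{L_s}+\Sigma_{\text{K}}^{-1}\succ\frac{1}{\sigma^2 t_2}I_{L_s}+\Sigma_{\text{K}}^{-1}\succ 0$, and since the determinant is strictly increasing on the positive-definite cone with respect to the Loewner order, $f(t_1,\Sigma_{\text{K}})>f(t_2,\Sigma_{\text{K}})$; equivalently $\frac{\partial}{\partial t}\log\!\left(1+\frac{\lambda_i}{\sigma^2 t}\right)=-\frac{\lambda_i}{\sigma^2 t^2+\lambda_i t}<0$ termwise, so $\partial f/\partial t<0$, and the sign is independent of $\{\lambda_i\}$, i.e. of $\Sigma_{\text{K}}$. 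The only point needing care — the closest thing to an obstacle in an otherwise routine rewriting of (\ref{mutual_inf_SMM_2})–(\ref{mutual_inf_SMM_3}) — is keeping the comparison inside the positive-definite cone so that Loewner-monotonicity of $\det$ applies, which is precisely where the assumptions $\Sigma_{\text{K}}\succ 0$ and $\norm{g}_2^2>0$ are invoked.
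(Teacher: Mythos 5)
Your proposal is correct and follows essentially the same route as the paper: it specializes (\ref{mutual_inf_SMM_2})--(\ref{mutual_inf_SMM_3}) to $\Sigma_{\text{y,f}}=\sigma^2\norm{g}_2^2 I_{L_s}$, takes $h(\Sigma_{\text{K}})=\log(\det(\Sigma_{\text{K}}))$ and $f=\log\det\bigl(\tfrac{1}{\sigma^2\norm{g}_2^2}I_{L_s}+\Sigma_{\text{K}}^{-1}\bigr)$, and then verifies that $f$ decreases in $\norm{g}_2^2$. The only difference is cosmetic: you establish monotonicity via Loewner-order monotonicity of the determinant (or the termwise derivative of $\log(1+\lambda_i/(\sigma^2 t))$), whereas the paper differentiates the product $\prod_i(1/z+\lambda_i)$ explicitly; both are valid, and yours is arguably cleaner.
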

\begin{proof}
Substituting $\Sigma_{\text{y,f}}= \sigma^2 \norm{g}_2^2 I_{L_s}$ in (\ref{mutual_inf_SMM_2}), the (scaled) mutual information can be written as
       \begin{equation*}
       \begin{aligned}
&\hat{I}=\log\left(\det\left(\frac{1}{\sigma^2 \norm{g}_2^2} I_{L_s}+\Sigma_{\text{K}}^{-1}\right)\right) +\underbrace{\log(\det(\Sigma_{\text{K}}))}_\text{$h(\Sigma_{\text{K}})$},\\ 
&= \underbrace{-L_s \log(\sigma^2 \norm{g}_2^2)+\sum_{i=1}^{L_s}\log(1+\sigma^2 \norm{g}_2^2 \lambda_i(\Sigma_{\text{K}}^{-1}))}_\text{$f(\norm{g}_2^2,\Sigma_{\text{K}})$}+h(\Sigma_{\text{K}}).\\ 
    \end{aligned}
    \end{equation*}
where the definition of the characteristic polynomial is used. Define $z=\sigma^2 \norm{g}_2^2 \geq 0$. It can then be shown that 
\begin{equation*}
f(z,\Sigma_\text{K})=-L_s \log(z)+\sum_{i=1}^{L_s}\log(1+z \lambda_i(\Sigma_{\text{K}})),
\end{equation*}
is monotonically decreasing with respect to $z$, and thus $\norm{g}_2^2$, irrespective of $\Sigma_{\text{K}}$. Rewrite $f$ and its derivative as
\begin{equation*}
f=\log\left( \underbrace{\prod_{i=1}^{L_s} \left(\frac{1+z \lambda_i}{z} \right)}_\text{$p(z,\lambda_i)$} \right),\quad \frac{\partial f}{\partial z}=\frac{1}{p(z,\lambda_i)}\frac{\partial p(z,\lambda_i)}{\partial z},\\
\end{equation*}
and observe first that $\lambda_i(\Sigma_{\text{K}}^{-1})>0$ for all $i$, because they are eigenvalues of the inverse of a positive definite matrix.
Define $p_i(z,\lambda_i)=\left(\frac{1+z \lambda_i}{z} \right)$. Since $p_i(z,\lambda_i)>0$, it follows that $p(z,\lambda_i)>0$ and thus the derivatives of $f$ and $p$ with respect to $z$ have the same sign.  Monotonic decrease of $f$ with respect to $z$ then follows from the fact that
\begin{equation*}
\begin{aligned}
\frac{\partial p_i(z,\lambda_i)}{\partial z}&=-\frac{1}{z^2},\\
\frac{\partial p(z,\lambda_i)}{\partial z}&= \underbrace{\left(\prod_{i=1}^{L_s}p_i(z,\lambda_i)\right)}_\text{$\geq0$}\underbrace{\left(\sum_{j=1}^{L_s}\frac{\partial p_j}{\partial z}\frac{ 1}{p_j}\right)}_\text{$\leq0$}.\\
\end{aligned}
\end{equation*}
\end{proof}

Recall that, when Page matrices are used, $\Sigma_{\text{y,f}}$ has the diagonal structure assumed in the lemma. This has two important implications for the input design problem.
First, searching for the input sequence which minimizes the Euclidean norm of $g$ also maximizes the mutual information. Second, the prior on the simulation output, introduced via the kernel $\Sigma_{\text{K}}$, has no effect on the input design problem. That is, maximizing the mutual information coincides with minimizing uncertainty in the data estimate (\ref{Bayes_meas}). Precisely, it coincides with the D-optimality criterion applied to $\Sigma_{\text{y,f}}$.
This is in contrast with recent results on input design for kernel-based impulse response identification \cite{Fujimoto_Auto_2018}, and is a property of the SMM estimator when the Page matrix representation is used.

\subsubsection{Optimization problem}

Motivated by the Bayesian interpretation, the experiment design optimization problem can be defined, at an abstract level, as the solution to the bi-level optimization problem
\begin{subequations}\label{ED_prog_bi-lev}
\begin{align}
    \underset{g,u_{\text{d}}}{\text{min}}&  \norm{g}_2^2, \label{ED_prog_bi-lev_1}\\
\text{s.t.}& \; \; u_{\text{d}} \in \mathcal{U}, \label{ED_prog_bi-lev_3} \\
    & g \in \underset{g\in\pazocal{G},u_{\text{d}} }{\arg \min } (\ref{eqn:opt0}), \label{ED_prog_bi-lev_2}
\end{align}
\end{subequations}
where $\mathcal{U}$ defines input constraints. Using the relaxations of the SMM objective function (\ref{eqn:opt0}) suggested in \cite{Ming_SMM_Arxiv}, and the idea to replace the inner optimization problem by its KKT conditions used in \cite{Iannelli_SYSID_2021}, problem (\ref{ED_prog_bi-lev}) can be formulated as the following nonlinear program
\begin{subequations}\label{ED_prog}
\begin{align}
&    \underset{g,u_{\text{d}},\nu}{\text{min}}  \norm{g}_2^2, \label{ED_prog_1}\\
    \text{s.t.} & \; \;
    \begin{bmatrix} F(u_{\text{d}})& U^{\top}\\ U & 0\end{bmatrix} \begin{bmatrix}g \\ \nu \end{bmatrix}=\begin{bmatrix} \hat{Y}_p(u_{\text{d}})^{\top}y_{\text{ini}} \\ \bar{u} \end{bmatrix}, \label{ED_prog_2}\\
& u_{\text{d}} \in \mathcal{U}, \label{ED_prog_3}
\end{align}
\end{subequations}
where: $\nu \in \mathbb{R}^{L}$ are Lagrangian multipliers; $\bar{u}=\left[ u_{\text{ini}}^\top \; \; u_{\text{s}}^\top \right]^\top$; and $F(u^d)=L \sigma^2 I_M + \hat{Y}_p(u_{\text{d}})^{\top} \hat{Y}_p(u_{\text{d}})$, where $M=c_H$ for Hankel matrices and $M=c_P$ for Page. The matrix $\hat{Y}_p$ is built with an output \emph{predicted} data trajectory $\hat{y}_{\text{d}}$ which linearly depends on the optimized input data trajectory $u_{\text{d}}$. This can be done using a baseline model, e.g. a previously identified impulse response model of the system. Analyses in \cite{Iannelli_SYSID_2021} showed that the accuracy of the baseline model has generally small impact on the input design problem. It is important to recognize that the trajectory optimized via (\ref{ED_prog}) automatically satisfies the input condition of Lemma \ref{DDsim_Relaxed}, since the constraint $Ug=\bar{u}$ effectively enforces the range constraint. As for the state condition of Lemma \ref{DDsim_Relaxed}, its fulfillment will depend on the choice of initial condition (see Remark 1). Finally, it is noted that, even though the Bayesian interpretation of problem (\ref{ED_prog}) only holds exactly for Page matrices, the same program will later be tested on Hankel matrices. This is done in order to preserve numerical tractability and is conceptually justified by the fact that the objective function (\ref{ED_prog_1}) can still be interpreted based on the A-optimality of the estimator's mean-square error (MSE) matrix \cite{Iannelli_SYSID_2021}.

\section{Numerical examples}\label{Results}
We consider the design of an input data trajectory $u_{\text{d}}$ for a data-driven simulation of the SISO system 
\begin{equation*}
    G(z) = \dfrac{0.1159(z^3+0.5z)}{z^4-2.2z^3+2.42z^2-1.87z+0.7225},
    \label{eqn:sys1}
\end{equation*}
which was originally investigated in \cite{Pillonetto_2010} and then also studied in \cite{Ming_SMM_Arxiv,Iannelli_SYSID_2021}. 
The accuracy of the estimated output simulation trajectory $\hat{y}_{\text{s}}$ is quantified via the following fit 
\begin{equation*} 
    W=100 \left(1-\left[\frac{\sum_{i=0}^{L_s-1}(y_{\text{s},i}-\hat{y}_{\text{s},i})^2}{\sum_{i=0}^{L_s-1}(y_{\text{s},i}-\bar{y}_{\text{s}})^2)^2}\right]^{1/2}\right),
\end{equation*}
where $\bar{y}_{\text{s}}$ is the mean of the true output sequence $y_{\text{s}}$.

For the input constraint set $\mathcal{U}$, a bound on the total energy is imposed here by defining $\mathcal{U} = \left\{u_{\text{d}} | \sum_{i=0}^{N-1}(u_{\text{d},i})^2 \leq E_0 N \right\}$. Other constraint sets (e.g. magnitude constraints) could be studied as well. The following parameters are kept fixed throughout the analyses:
$
L_0=4,L_s=10, L=14, E_0=0.1$. Note that $L_0$ is equal to the system order, thus the condition $L_0 \geq l$ is satisfied.
\begin{color}{black}
The baseline model consists of a truncated impulse response model of length $4 L_s$ estimated with the \emph{impulseest} MATLAB function using a prior experiment with i.i.d. Gaussian inputs of length $100$ with SNR=10. 
\end{color}
The solver IPOPT \cite{IPOPT} is employed to solve the nonlinear program (\ref{ED_prog}). The optimized input data trajectory is then used to do data-driven simulation with SMM, i.e.: compute $g$ by solving the relaxed version of (\ref{eqn:opt0}) proposed in \cite{Ming_SMM_Arxiv}; and then predict $\hat{y}_{\text{s}}$ with (\ref{SMM_sim}). 

In the first experiment, we compare the accuracy of Page and Hankel data matrices when data trajectories of \emph{same} length $N$ are used to predict the response to an impulse. Figure \ref{Fig_1_PvsH} shows the mean fit $W$ over 200 realizations of the noise as a function of $N$. Noise contaminates the output data trajectory as in (\ref{noise_data}), and two cases are considered, $\sigma^2=0.001$ and $\sigma^2=0.01$, which correspond respectively to a signal-to-noise ratio (SNR) of 100 and 10. 
\begin{figure}[h!]%
\begin{center}
\includegraphics[width=0.965\columnwidth]{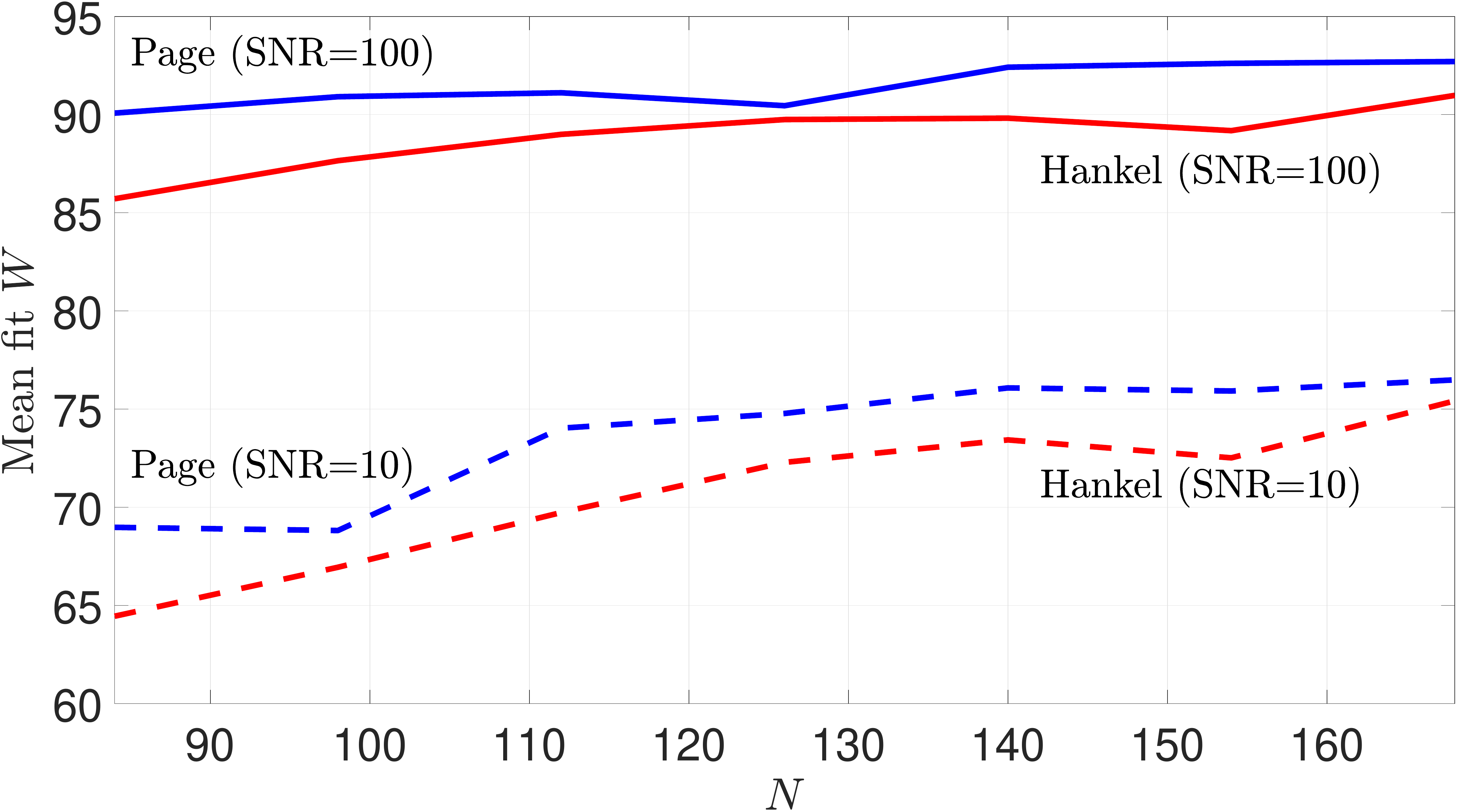}
\caption{Mean fit of Hankel (red) and Page (blue) matrices with optimally designed input as a function of $N$ and $\sigma^2$.}
\label{Fig_1_PvsH}
\end{center}
\end{figure}

The plot shows that Page matrices always outperform Hankel matrices. It is remarked that, since the comparison is made for equal length $N$ of the experiments, Hankel data matrices have a larger number of columns (precisely, $c_H(N)-c_P(N)=N(L-1)/L-L+1$ as discussed in Section \ref{DD_std}). In this type of comparison, typically done using i.i.d. Gaussian inputs, Page matrices are reportedly less accurate than Hankel matrices \cite{Markovsky2020identifiability}. The favourable trend reported in Figure \ref{Fig_1_PvsH} is achieved by using the relaxed excitation conditions from Lemma \ref{DDsim_Relaxed} together with the optimally designed input according to problem (\ref{ED_prog}). It is also noted that a comparison with standard persistently exciting inputs is not possible, for the Page matrix, because in the range of $N$ in Fig. \ref{Fig_1_PvsH} the classic excitation requirements of Lemma \ref{DDsim_Markovsky_Page} do not hold. 

In Figure \ref{Fig_2_boxPlots}, a comparison is made between the prediction accuracy of Page and Hankel matrices for two different simulation problems where $u_{\text{d}}$ is respectively a heavily and lightly damped sine wave.
\begin{figure}[h!]%
\begin{center}
\includegraphics[width=1\columnwidth]{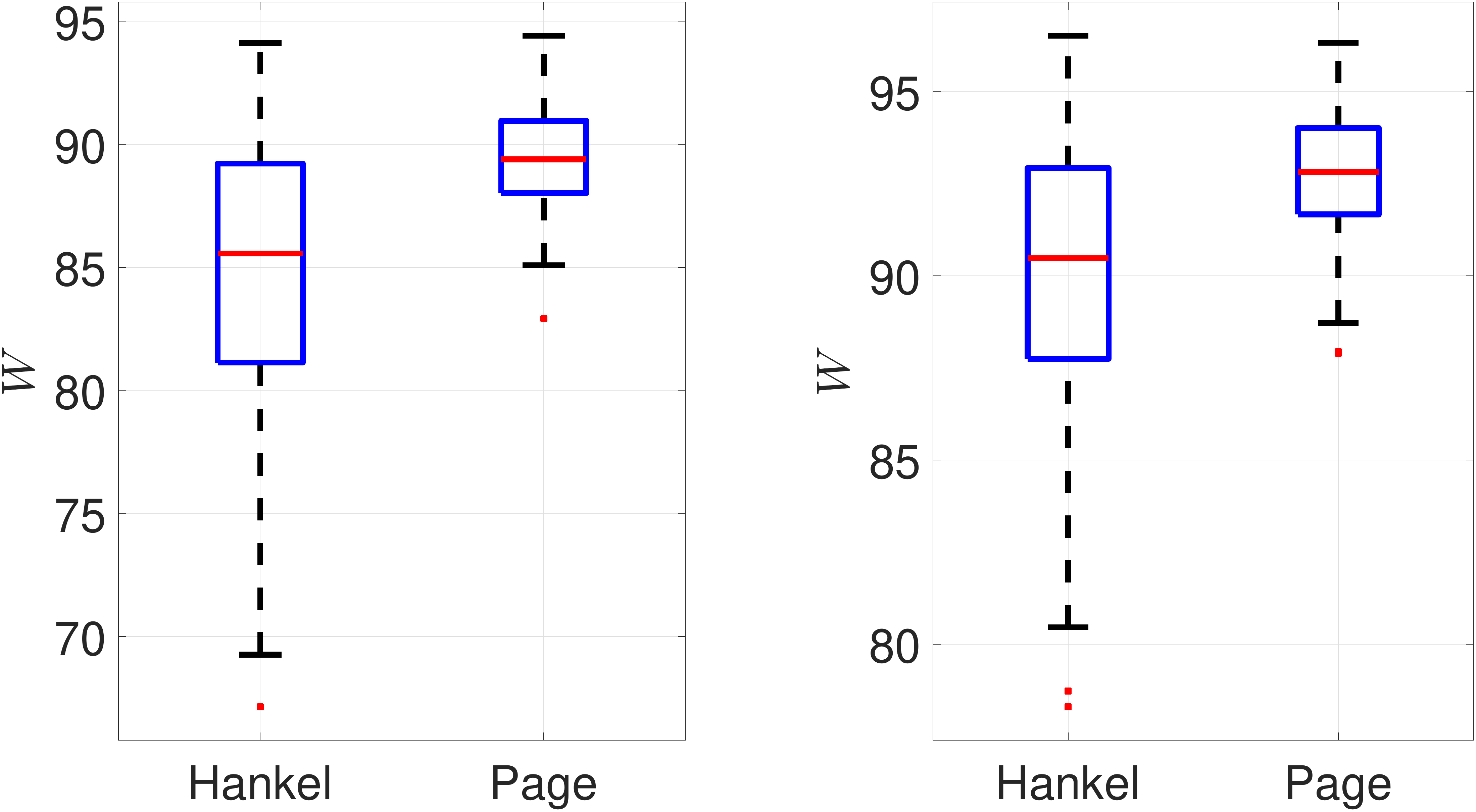}
\caption{Fit box plots for Page and Hankel matrices with $N=84$ and SNR=100 (200 realizations of the noise). Left: heavily, right: lightly, damped sine wave.}
\label{Fig_2_boxPlots}
\end{center}
\end{figure}

The advantage of employing Page matrices can be appreciated here also in terms of dispersion. There are a few aspects that can provide an explanation for these results. It is intuitively expected that input design is more effective with the Page matrix representation, as the matrix $U$ can be designed without structural constraints. In addition, Page matrices are known to have favourable properties when dealing with noisy data matrices \cite{Damen_Page_2982SCL}. Moreover, the input design criterion and the expression of the covariance $\Sigma_{\text{y}}$ used in the SMM problem are only exact for Page representations. All these reasons provide valuable justifications for the use of Page matrices in data-driven simulation problems. Lemma \ref{DDsim_Relaxed} is a key enabler that makes this possible in the regime of data length $N$ where a comparison with Hankel matrices is still meaningful.

\section{Conclusions}
Less restrictive excitation requirements have been presented for the data-driven simulation problem with clean data and Page and Hankel matrix representations. Building on these, a Bayesian input design problem for the case of noisy data has been formulated. For the Page matrix case, this can be interpreted as the choice of input trajectory maximizing the distance between prior and posterior distributions of the output response, while for the Hankel matrix this only holds approximately. Numerical results show that, by leveraging the results presented in the paper, the Page matrix representation can markedly outperform the Hankel one with the same data length. It is an interesting research question whether a similar approach to data informativity can be used when the data matrices are used for control.

\bibliographystyle{IEEEtran}

\bibliography{CDC21_exp}

\end{document}